\newtheorem{thm}{Theorem}[section]
\newtheorem{defn}{Definition}[section]
\newtheorem{rem}{Remark}
\title{\LARGE \bf Data-Efficient Control Barrier Function Refinement} 
\author{Bolun Dai$^{1}$, Heming Huang$^{1}$, Prashanth Krishnamurthy$^{1}$, Farshad Khorrami$^{1}$
\thanks{$^{1}$Control/Robotics Research Laboratory, Electrical~\&~Computer Engineering Department, Tandon School of Engineering, New York University, Brooklyn, NY, 11201
{\tt\small bd1555@nyu.edu, hh2564@nyu.edu, prashanth.krishnamurthy@nyu.edu, khorrami@nyu.edu}
}
}
\newcommand{\x}{\mathbf{x}}
\begin{document}

\maketitle
\thispagestyle{empty}
\pagestyle{empty}

\maketitle
\begin{abstract}
Control barrier functions (CBFs) have been widely used for synthesizing controllers in safety-critical applications. When used as a safety filter, it provides a simple and computationally efficient way to obtain safe controls from a possibly unsafe performance controller. Despite its conceptual simplicity, constructing a valid CBF is well known to be challenging, especially for high-relative degree systems under nonconvex constraints. Recently, work has been done to learn a valid CBF from data based on a handcrafted CBF (HCBF). Even though the HCBF gives a good initialization point, it still requires a large amount of data to train the CBF network. In this work, we propose a new method to learn more efficiently from the collected data through a novel prioritized data sampling strategy. A priority score is computed from the loss value of each data point. Then, a probability distribution based on the priority score of the data points is used to sample data and update the learned CBF. Using our proposed approach, we can learn a valid CBF that recovers a larger portion of the true safe set using a smaller amount of data. The effectiveness of our method is demonstrated in simulation on a unicycle and a two-link arm.
\end{abstract}
\section{Introduction}
With the rapid integration of automated systems in our daily lives, the ability to ensure operation safety for automatic systems has gained more attention~\cite{DaiKPK21}~\cite{DaiSKK21}. In optimal control, the safety requirements can be formulated as optimization constraints. Following this line of thought, one of the most popular approaches is model predictive control (MPC). However, the main obstacle to directly applying MPC is its solution time. MPC is more commonly used in a multi-layered control architecture, where MPC generates a reference trajectory and a reactive control tracks the reference trajectory at a higher frequency. One issue with this approach is that the constraint is only active when possible unsafe scenarios might occur within the preview horizon. When the time horizon is small, this leads to abrupt control actions, which is not ideal for most safety-critical applications. Increasing the time horizon is also not ideal since the solution time would also increase.

Recently, CBF-based methods~\cite{DBLP:conf/eucc/AmesCENST19} have become increasingly popular in the field of safety-critical control. CBFs act as a certificate of the safety of a state with respect to a safe set. For zeroing CBFs, a state is safe when the CBF at that state is positive and unsafe when it is negative. Additionally, CBFs can be used to alter unsafe control actions into safe control actions in a minimally invasive way. Since CBFs are commonly used within a quadratic program (QP)~\cite{DBLP:conf/cdc/AmesGT14}, the solution time is much faster than MPC. Additionally, it is natural for CBF-based controllers to consider constraint boundaries even when the system is far from them, thus, generating smooth control actions. Although CBFs have many advantages, constructing a valid CBF that recovers the true safe set is still challenging~\cite{DBLP:conf/cdc/ChoiLSTH21}\cite{ChenWC22}. However, finding an HCBF that recovers only a portion of the true safe set is relatively easy. Taking advantage of this insight, work has been done in learning-based CBF refinement starting from an HCBF~\cite{DaiKK22}. This approach makes constructing CBFs much easier, even for nonconvex safe sets. However, this approach requires a large dataset collected online, which can be both time-consuming and expensive.

Similar to learning CBFs, other learning-based methods also suffer from high sample complexity~\cite{WeiKK23}, e.g., deep reinforcement learning (DRL). Experience replay (ER) is a commonly used method in off-policy DRL algorithms~\cite{mnih2015human}\cite{LillicrapHPHETS15} to reduce the sample complexity by recycling data collected from previous episodes. When using ER, the data is stored in a replay buffer and sampled when updating the control policy. In the vanilla version of ER, the data is sampled uniformly. Prioritized experience replay (PER)~\cite{SchaulQAS15} is then proposed as a low sample complexity version of ER. For PER, the stored data is sampled based on a priority score. Data with higher priorities are more likely to be sampled, and vice versa. The use of PER greatly reduces the sample complexity of off-policy DRL algorithms~\cite{pan2022understanding}~\cite{FujimotoMP20}. In this work, we propose integrating a PER strategy into learning-based CBF refinement.

In this paper, we propose an algorithmic approach to enable lower sample complexity for learning a CBF starting from an HCBF. The main contribution of this paper is threefold: (1) we developed a method that integrates PER into the learning-based CBF refinement pipeline; (2) we provide a theoretical analysis of how PER affects the learning process; (3) we demonstrate the effectiveness of our approach using simulation studies. The remainder of this paper is structured as follows. In Section II, a background summary regarding CBFs and PER is provided. In Section III, the data-efficient CBF refinement problem is formulated. In Section IV, we describe the CBF learning algorithm along with the prioritized data sampling pipeline. In Section V, The effectiveness of our method is demonstrated in simulation on a unicycle and a two-link arm.. Finally, in Section VI, the paper is concluded.
\section{Preliminaries}
In this section, we present a brief introduction to CBF-based control and PER.

\subsection{Control Barrier Function}

Consider a set $\mathcal{C}$ that is the 0-superlevel set of a continuously differentiable function $h: \mathcal{C} \subset \mathcal{D} \rightarrow \mathbb{R}$, yielding
\begin{subequations}
\label{eq:cbf_conditions}
\begin{align}
    \mathcal{C} &= \{x\in\mathcal{D}\subset\mathbb{R}^n \mid h(x) \geq 0\}\\
    \partial\mathcal{C} &= \{x\in\mathcal{D}\subset\mathbb{R}^n \mid h(x) = 0\}\\
    \mathrm{Int}(\mathcal{C}) &= \{x\in\mathcal{D}\subset\mathbb{R}^n \mid h(x) > 0\}
\end{align}
\end{subequations}
where $\partial\mathcal{C}$ represents the boundary of $\mathcal{C}$ and $\mathrm{Int}(\mathcal{C})$ represents the interior of $\mathcal{C}$. Additionally, we assume that $\mathrm{Int}(\mathcal{C})$ is not an empty set, i.e., $\mathrm{Int}(\mathcal{C})\neq\emptyset$, and $\mathcal{C}$ does not contain any isolated points. We refer to this set $\mathcal{C}$ as the safe set. Then, consider a system in control affine form
\begin{equation}
    \dot{x} = f(x) + g(x)u
    \label{eq:control_affine_system}
\end{equation}
where the state is $x\in\mathbb{R}^n$, the control is $u\in\mathbb{R}^m$, the drift is $f: \mathbb{R}^n\rightarrow\mathbb{R}^n$, and the control influence matrix is $g: \mathbb{R}^n\rightarrow\mathbb{R}^{n\times m}$. Both $f$ and $g$ are also locally Lipschitz continuous. The closed-loop dynamics of the system in~\eqref{eq:control_affine_system} is
\begin{equation}
    \dot{x} = f_{\mathrm{cl}}(x) = f(x) + g(x)\pi(x)
    \label{eq:closed_loop_system}
\end{equation}
where the feedback controller is
\begin{equation}
    u = \pi(x)
\end{equation}
and $\pi: \mathbb{R}^n\rightarrow\mathbb{R}^{m}$ is locally Lipschitz continuous. For any initial state $x_0\in\mathbb{R}^n$, there exists an interval of existence
\begin{equation}
    I(x_0) = [t_0, t_{\mathrm{max}})
\end{equation}
such that $x(t)$ is the unique solution to~\eqref{eq:closed_loop_system} on $I(x_0)$. When $t_{\mathrm{max}} = \infty$, the system in~\eqref{eq:closed_loop_system} is forward complete. Using the concepts mentioned above, the notions of forward invariance, safety with respect to a set, and CBFs are defined as follows.

\begin{defn}[Forward Invariance \& Safety]
The system defined in~\eqref{eq:closed_loop_system} is forward invariant with respect to a set $\mathcal{C}\subset\mathbb{R}^n$ if for every $x_0\in\mathcal{C}$, we have $x(t)\in\mathcal{C}$ for all $t \in I(x_0)$. A system that is forward invariant with respect to $\mathcal{C}$ is said to be safe with respect to $\mathcal{C}$. A controller that makes a closed-loop system safe with respect to $\mathcal{C}$ is said to be safe with respect to $\mathcal{C}$.
\end{defn}

\begin{defn}[Control Barrier Function~\cite{DBLP:conf/eucc/AmesCENST19}]
Let $\mathcal{C}$ be the 0-superlevel set of a continuously differentiable function $h: \mathcal{D}\rightarrow\mathbb{R}$, with $\partial h/\partial x \neq 0$ for all $x\in\partial\mathcal{C}$. Then, $h$ is a control barrier function (CBF) on $\mathcal{C}$ if there exists an extended class $\mathcal{K}_\infty$ function $\alpha: \mathbb{R}\rightarrow\mathbb{R}$ such that for all $x\in\mathcal{D}$ the system defined in~\eqref{eq:control_affine_system} satisfies
\begin{equation}
    \sup_{u\in\mathcal{U}}\Big[\frac{\partial h(x)}{\partial x}\Big(f(x) + g(x)u\Big)\Big] \geq -\alpha(h(x))
    \label{eq:CBF_constraint}
\end{equation}
with $\mathcal{U}$ being the admissible set of controls.
\end{defn}

The CBF constraint in~\eqref{eq:CBF_constraint} is used in CBF-based quadratic programs (CBFQPs)~\cite{DBLP:conf/iccps/GurrietSRCFA18} that find the closest safe control action to a possibly unsafe control action
\begingroup
\allowdisplaybreaks
\begin{align}
    \min_{u\in\mathcal{U}}\ &\ \|u - \pi_\mathrm{perf}(x)\|^2\\
    \mathrm{subject\ to}\ &\ \Big[\frac{\partial h(x)}{\partial x}\Big(f(x) + g(x)u\Big)\Big] \geq -\alpha(h(x))\nonumber
\end{align}
\endgroup
with the possibly unsafe control action being generated by a performance controller $\pi_\mathrm{perf}: \mathbb{R}^n\rightarrow\mathbb{R}^m$. Assuming no control constraints (a common assumption in the CBF literature) and a valid CBF, the CBFQP will always be feasible, and the corresponding controller is Lipschitz continuous~\cite{DBLP:conf/cdc/MorrisPA13}.

\subsection{Prioritized Experience Replay}
Unlike vanilla ER, where the data is sampled uniformly, for PER, the data are sampled based on a priority score $p$. In off-policy DRL algorithms, the priority score is commonly chosen as the temporal difference (TD) error
\begin{equation}
    \mathrm{TD\ Error} = Q - (r + Q^\prime)
\end{equation}
with $Q\in\mathbb{R}$ being the action-value function at this state, $r\in\mathbb{R}$ being the reward, and $Q^\prime\in\mathbb{R}$ the action-value function at the next state. Then, the sampling probability for each data point can be computed as
\begin{equation}
    P(i) = \frac{p_i^{\alpha_p}}{\sum_{i=0}^{N}p_i^{\alpha_p}}
\end{equation}
where $P(i)$ represents the probability of sampling the $i$-th data point, $p_i$ is the priority score for the $i$-th data point, $N$ is the number of data points within the replay buffer, and $\alpha_p$ regulates how PER behaves, with $\alpha_p = 0$ corresponding to uniform sampling and $\alpha_p = 1$ corresponding to full prioritized sampling. The implementation details of PER will be presented in Section~\ref{sec:prioritized_sampling}.
\section{Problem Formulation}
In this section, we present the problem formulation for data-efficient CBF refinement. Consider the constraints
\begin{equation}
    c_i(x) \geq 0
    \label{eq:state_constraints}
\end{equation}
with $c_i:\mathbb{R}^n \rightarrow \mathbb{R}$ and $i = 1, \cdots, n_c$. Define $\mathcal{S}_i$ as
\begin{equation}
    \mathcal{S}_i = \{x \mid c_i(x) > 0\}
\end{equation}
and the intersections of all $\mathcal{S}_i$'s as $\mathcal{S}$
\begin{equation}
    \mathcal{S} = \bigcap_{i=1}^{n_c}{\mathcal{S}_i}.
\end{equation}
Then, the safe set $\mathcal{C}$ is defined as the largest forward invariant set in $\mathcal{S}$, which leads to the relationship
\begin{equation}
    \mathcal{C} \subseteq \mathcal{S}.
\end{equation}
Given the true safe set of our problem $\mathcal{C}$, we assume access~\cite{DaiKK22} to an HCBF $\widehat{h}: \mathbb{R}^n\rightarrow\mathbb{R}$ that is continuously differentiable, locally Lipschitz continuous and satisfies
\begin{equation}
    \widehat{\mathcal{C}} = \{x \mid \widehat{h}(x) \geq 0\} \subseteq \mathcal{C}.
\end{equation}
Without loss of generality, with $h$ defined in~\eqref{eq:cbf_conditions}, we can assume the relationship
\begin{equation}
\label{eq:cbf_relationship}
    h(x) = \widehat{h}(x) + \Delta{h}(x)
\end{equation}
with $\Delta h: \mathbb{R}^n \rightarrow \mathbb{R}$ being a continuously differentiable and locally Lipschitz continuous function. Additionally, we assume that $\widehat{h}$ and $\Delta{h}$ have the same relative degree as $h$. This is a mild assumption given that the relative degree of systems can be inferred using first principles. In this paper, we aim to find an algorithmic approach that learns $\Delta{h}$ in a {\em data-efficient} manner.
\section{Method}
In this section, we present our proposed approach for the data-efficient refinement of CBFs. First, we describe the process of learning-based CBF refinement. Then, we show how to improve the CBF learning procedure using PER. Finally, we discuss how to evaluate the effectiveness of our proposed approach.

\subsection{Learning-Based CBF Refinement}
Since the only unknown in~\eqref{eq:cbf_relationship} is $\Delta{h}$, we can use a neural network  parameterized by $\theta$ to estimate $\Delta{h}(x)$, which is denoted as $\Delta\widehat{h}(x\mid\theta)$. Then, the learned CBF is defined as
\begin{equation}
    \widetilde{h}(x\mid\theta) = \widehat{h}(x) + \Delta\widehat{h}(x\mid\theta).
\end{equation}
We use a deep differential network (DDN)~\cite{LutterRP19} to represent $\Delta\widehat{h}(x\mid\theta)$, which given input $x$, outputs both $\Delta\widehat{h}(x\mid\theta)$ and the analytical $\partial\Delta\widehat{h}(x\mid\theta)/\partial x$. If smooth activation functions are used, DDNs will be continuously differentiable. Another challenge when learning the CBF is having no ground truth CBF values. However, it is easier to recognize whether a state is safe using the constraints in~\eqref{eq:state_constraints}. We then write loss functions to enforce the conditions in~\eqref{eq:cbf_conditions}, yielding,
\begin{subequations}
\label{eq:old_loss_functions}
\begin{align}
    \mathcal{L}_+(\theta) &= \frac{1}{B}\sum_{x_i\in\mathcal{X}_+}^{}\max\Big(0, -\widetilde{h}(x_i\mid\theta)\Big)\\
    \mathcal{L}_-(\theta) &= \frac{1}{B}\sum_{x_i\in\mathcal{X}_-}^{}\max\Big(0, \widetilde{h}(x_i\mid\theta)\Big)
\end{align}
\end{subequations}
with $B$ representing the batch size, $\mathcal{L}_+$ representing the loss for safe states, $\mathcal{L}_-$ representing the loss for unsafe states, $\mathcal{X}_{+}$ being the dataset containing safe interactions, and $\mathcal{X}_{-}$ being the dataset containing unsafe interactions. It is much easier to determine whether a state is safe using~\eqref{eq:state_constraints}. One major issue with the loss functions in~\eqref{eq:old_loss_functions} is that a sample $x$ that satisfies
\begin{equation}
    \widetilde{h}(x\mid\theta) \equiv 0
\end{equation}
gives zero loss. To better guide the learning, a distance function $d(x)$ is introduced~\cite{DaiKK22} that increases while going inside the interior of the safe set and decreases while going inside the interior of the unsafe set. Using the distance function, we have the new loss function as
\begin{subequations}
\label{eq:safe_unsafe_loss}
\begin{align}
    \mathcal{L}_+(\theta) &= \frac{1}{B}\sum_{x_i\in\mathcal{X}_+}^{}\max\Big(0, -\widetilde{h}(x_i\mid\theta) + d(x_i)\Big)\\
    \mathcal{L}_-(\theta) &= \frac{1}{B}\sum_{x_i\in\mathcal{X}_-}^{}\max\Big(0, \widetilde{h}(x_i\mid\theta) - d(x_i)\Big).
\end{align}
\end{subequations}
Additionally, to ensure that control actions exists at each point within the safe set that satisfies~\eqref{eq:CBF_constraint}, we use the loss function
\begingroup
\allowdisplaybreaks
\begin{align}
\label{eq:cbf_constraint_loss}
    \mathcal{L}_{\nabla h}(\theta) =&\ \frac{1}{B}\sum_{x_i\in\mathcal{X}_\nabla}^{}\max\Big(0, \frac{\partial\widetilde{h}(x_i\mid\theta)}{\partial x}\dot{x}_i\nonumber\\
    &- \alpha(\widetilde{h}(x_i\mid\theta))\Big)
\end{align}
\endgroup
with $\mathcal{X}_\nabla$ being the replay buffer for the CBF constraint loss. Then, we have the final loss function as
\begin{equation}
\label{eq:cbf_loss}
    \mathcal{L}(\theta) = \mathcal{L}_+(\theta) + \lambda\mathcal{L}_-(\theta) + \mathcal{L}_{\nabla h}(\theta)
\end{equation}
where $\lambda\in\mathbb{R}_+$ weights the importance of the unsafe loss. Since it is much more dangerous to recognize an unsafe state as a safe one than the other way around, usually $\lambda > 1$.

The training process is separated into multiple epochs, where each epoch consists of one episode and multiple neural network updates. During each episode, a CBFQP controller using the learned CBF is used to collect state-action tuples, which are then stored in the replay buffers. After each episode is finished, data are sampled from the replay buffers in batches to compute the loss function in~\eqref{eq:cbf_loss} and its gradient. Then, using a stochastic gradient descent (SGD) algorithm, e.g., ADAM~\cite{KingmaB14}, the weight vector of the neural network is updated. For each update, new batches of data will be sampled. Note that, during training, the system would encounter unsafe states. We can utilize simulation environments (e.g., PyBullet or MuJoCo) or create safe versions of the real scenario (e.g., replace obstacles with virtual projections) to avoid endangering the real system.

\subsection{Prioritized Sampling}
\label{sec:prioritized_sampling}
In this section, we will discuss how to incorporate a PER buffer into the CBF learning pipeline. To utilize PER, we first need to construct a priority score. For the unsafe buffer, we choose
\begin{equation}
\label{eq:unsafe_priority}
    p_i^- = \max\Big(0, \widetilde{h}(x_i\mid\theta) - d(x_i)\Big).
\end{equation}
For the safe buffer, we choose
\begin{equation}
\label{eq:safe_priority}
    p_i^+ = \max\Big(0, -\widetilde{h}(x_i\mid\theta) + d(x_i)\Big).
\end{equation}
For the CBF constraint buffer, we choose
\begin{equation}
    p_i^\nabla = \max\Big(0, \frac{\partial\widetilde{h}(x_i\mid\theta)}{\partial x}\dot{x}_i - \alpha(\widetilde{h}(x_i\mid\theta))\Big).
\end{equation}
To make the prioritized buffer computationally more efficient, we can use a sum tree data structure~\cite{SchaulQAS15} as shown in Fig.~\ref{fig:PER}, which is popular among PER~\cite{SchaulQAS15} implementations. In the sum tree, each leaf node represents a data point with a value corresponding to the data point priority. Each parent node has two children. The parent node value is the sum of the children node's values. The value of the root node is the sum of all leaf node values, which corresponds to $\sum_{i=1}^{N}{p_i}$. When adding a data point, it occupies the next available leaf node, and the values of all of its parent nodes are updated. This update has a time complexity of $O(\log{n})$. When sampling a data point, a random number $\xi$ is sampled from $[0, \sum_N p_i]$. Then, starting from the root node, $\xi$ is first compared with the value of the left child node; if smaller, the left child node is considered; if larger, $\xi \leftarrow \xi - p_l$, with $p_l$ being the value of the left children node, and the right child node is considered. This operation is recursively applied until one of the leaf nodes is reached. This process is equivalent to drawing a line from zero to $\sum_N p_i$ and letting each leaf node occupy a space with length $p_i$. Then the probability of $\xi$ landing inside the portion belonging to node $i$ has the probability of $p_i / \sum_N p_i$, which is illustrated in Fig.~\ref{fig:PER}. 

\begin{figure}[t!]
    \centering
    \includegraphics[width=0.49\textwidth]{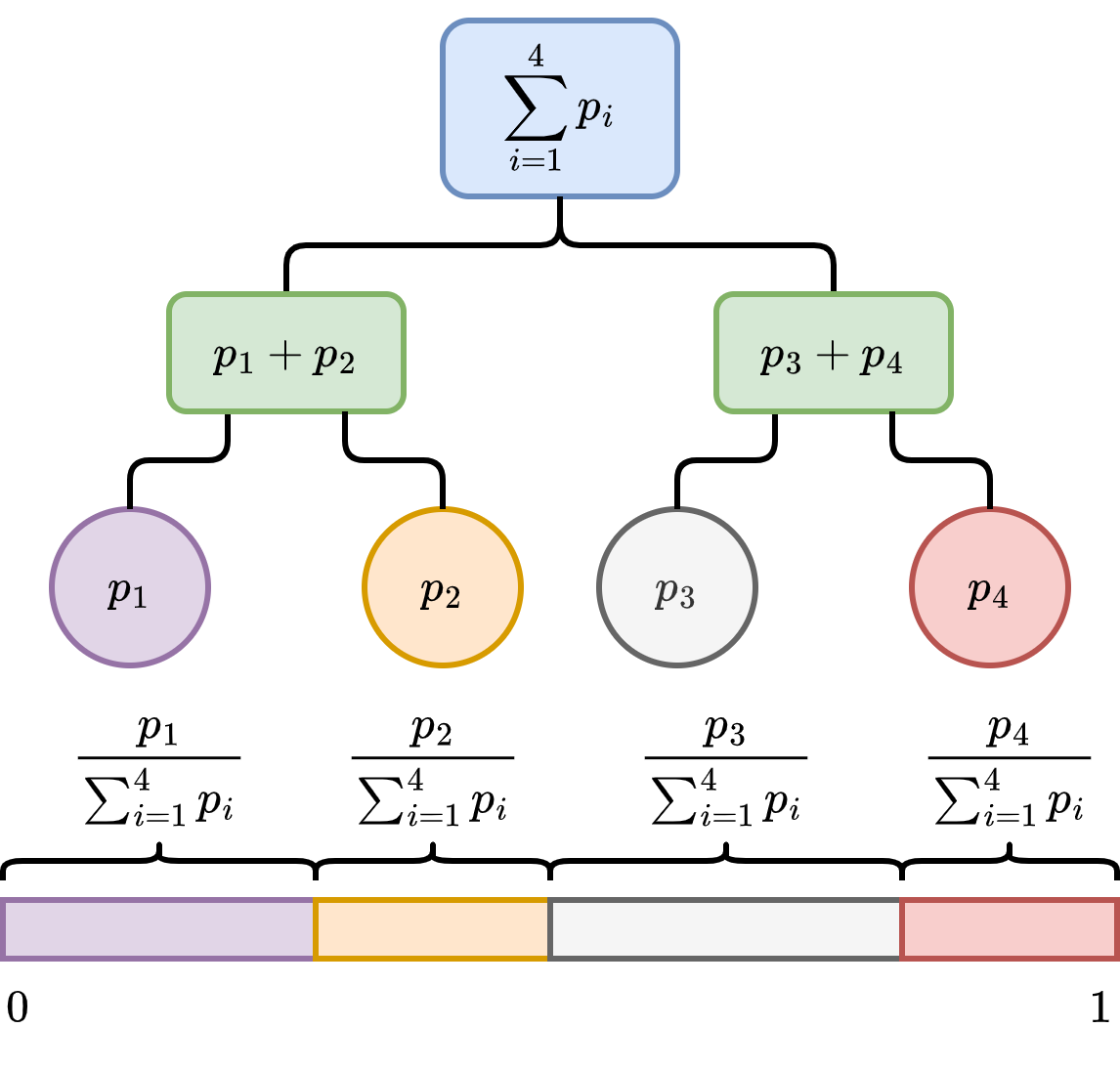}
    \caption{This figure illustrates the structure of a sum tree with four leaf nodes and its corresponding probability distribution.}
    \label{fig:PER}
\end{figure}

Following the formulation in~\cite{SchaulQAS15}, a fake priority is assigned when data is first added to the buffer; the priority is then updated when the loss function is computed. The fake priority is chosen to be $p_\mathrm{fake} = p_\mathrm{max} + \epsilon$ with $p_\mathrm{max}$ being the largest priority currently present in the PER buffer and $\epsilon\in\mathbb{R}_+$ being a small value. This choice of $p_\mathrm{fake}$ ensures new data has the largest priority. Additionally, the relationship
\begin{equation}
\label{eq:update_relationship}
    BN_U \geq T
\end{equation}
must hold, with $N_U$ being the number of updates per epoch and $T$ being the number of time steps within each episode. These conditions ensure each new data point will be sampled at least once and an actual priority value is assigned. Next, we discuss how the prioritized sampling scheme can expedite learning in the CBF refinement setting.
\begin{rem}
When using~\eqref{eq:safe_unsafe_loss} and~\eqref{eq:cbf_constraint_loss}, the replay buffer can be separated into two parts, one that has zero losses and the other that has nonzero losses. For the zero loss portion, the sampling priority is the lowest, and the corresponding gradient value is zero.
\end{rem}
\begin{rem}
For the nonzero loss portion, \eqref{eq:safe_unsafe_loss} and~\eqref{eq:cbf_constraint_loss} have the same form as
\begin{equation}
    \mathcal{L}_{L1} = \frac{1}{\bar{N}}\sum_{i=1}^{\bar{N}}|F_\theta(x_i) - y_i| 
\end{equation}
with $\bar{N}\in\mathbb{Z}_+$ being the size of the replay buffer, $F_\theta: \mathbb{R}^n\rightarrow\mathbb{R}^{n_o}$ the neural network, $y_i\in\mathbb{R}^{n_o}$ the training labels, and $n_o$ the dimension of the output ($n_o = 1$ in our case). 
\end{rem}

\begingroup
\allowdisplaybreaks
\begin{thm}
\label{thm:key}
For a constant $c$ that depends on $\theta$, we have
\begin{align}
    &\ c\cdot\mathbb{E}_{(x, y)\sim\mathrm{Prioritized}}\Big[\nabla_\theta|F_\theta(x) - y|\Big]\nonumber\\
    &= \mathbb{E}_{(x, y)\sim\mathrm{Uniform}}\Big[\frac{1}{2}\nabla_\theta(F_\theta(x) - y)^2\Big]
\end{align}
\end{thm}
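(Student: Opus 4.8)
The plan is to evaluate both expectations explicitly as finite sums over the $\bar{N}$ data points in the replay buffer and then read off the scalar $c$ that forces them to coincide. The crucial observation is that, in this $\mathcal{L}_{L1}$ setting, the priority assigned to each sample is precisely its per-sample loss, so for the $i$-th point $p_i = |F_\theta(x_i) - y_i|$, matching the priority definitions used throughout Section~\ref{sec:prioritized_sampling}. Writing the residual as $r_i = F_\theta(x_i) - y_i$ and taking $\alpha_p = 1$ (full prioritization), the sampling probability becomes $P(i) = p_i / \sum_{j=1}^{\bar{N}} p_j = |r_i| / \sum_{j=1}^{\bar{N}} |r_j|$.

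First I would compute the left-hand side. Since the labels $y_i$ do not depend on $\theta$, the chain rule gives $\nabla_\theta |r_i| = \operatorname{sign}(r_i)\,\nabla_\theta F_\theta(x_i)$ wherever $r_i \neq 0$. Substituting the prioritized probabilities,
\begin{align*}
\mathbb{E}_{\mathrm{Prioritized}}\big[\nabla_\theta|F_\theta(x)-y|\big] &= \sum_{i=1}^{\bar{N}} \frac{|r_i|}{\sum_{j=1}^{\bar{N}}|r_j|}\operatorname{sign}(r_i)\,\nabla_\theta F_\theta(x_i)\\
&= \frac{1}{\sum_{j=1}^{\bar{N}}|r_j|}\sum_{i=1}^{\bar{N}} r_i\,\nabla_\theta F_\theta(x_i),
\end{align*}
where the key cancellation $|r_i|\operatorname{sign}(r_i) = r_i$ removes the absolute value entirely. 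This is the heart of the argument: the $|r_i|$ supplied by the priority weighting exactly absorbs the reciprocal $\operatorname{sign}$ produced by differentiating $|r_i|$.

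Next I would compute the right-hand side under uniform sampling. Using $\tfrac{1}{2}\nabla_\theta r_i^2 = r_i\,\nabla_\theta F_\theta(x_i)$, one obtains $\mathbb{E}_{\mathrm{Uniform}}\big[\tfrac{1}{2}\nabla_\theta(F_\theta(x)-y)^2\big] = \tfrac{1}{\bar{N}}\sum_{i=1}^{\bar{N}} r_i\,\nabla_\theta F_\theta(x_i)$. Both sides are therefore scalar multiples of the same vector $\sum_{i} r_i\,\nabla_\theta F_\theta(x_i)$, and equating the prefactors yields $c = \tfrac{1}{\bar{N}}\sum_{j=1}^{\bar{N}}|r_j| = \mathcal{L}_{L1}$. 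This identifies $c$ as the mean L1 loss, which manifestly depends on $\theta$ through the residuals, consistent with the statement.

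The main obstacle is the non-differentiability of $|\cdot|$ at the origin: the step $\nabla_\theta|r_i| = \operatorname{sign}(r_i)\,\nabla_\theta F_\theta(x_i)$ is only valid for $r_i \neq 0$. This is exactly why the preceding remarks split the buffer into a zero-loss and a nonzero-loss portion; on the zero-loss portion both the priority $|r_i|$ and the corresponding gradient vanish, so those samples contribute nothing to either side and may be dropped without affecting the identity. I would state this restriction explicitly, and note that the algebraic relation $|r_i|\operatorname{sign}(r_i)=r_i$ continues to hold (trivially) even at $r_i=0$, so no boundary terms are lost in the summation and the equality is exact on the relevant portion of the buffer.
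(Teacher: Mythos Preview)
Your argument is correct and follows essentially the same route as the paper: both expand the prioritized expectation as a finite sum with weights $|r_i|/\sum_j|r_j|$, let the priority factor cancel against the derivative of the absolute value, and identify $c=\tfrac{1}{\bar N}\sum_j|r_j|$. The only cosmetic difference is that the paper differentiates $|r_i|$ via $|r_i|=\bigl(r_i^2\bigr)^{1/2}$ rather than using $\operatorname{sign}(r_i)$ directly, and your explicit treatment of the $r_i=0$ case is actually more careful than the paper's.
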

\endgroup

\begin{proof}
\begingroup
\allowdisplaybreaks
\begin{align}
    &\ \mathbb{E}_{(x, y) \sim \mathrm{Prioritized}}\Bigg[\frac{\partial|F_\theta(x) - y|}{\partial\theta}\Bigg]\nonumber\\
    =\ &\ \frac{1}{\sum_{i = 1}^{\bar{N}}|F_\theta(x_i) - y_i|}\sum_{i=1}^{\bar{N}}|F_\theta(x_i) - y_i|\frac{\partial|F_\theta(x_i) - y_i|}{\partial\theta}\nonumber\\
    =\ &\ \frac{1}{\sum_{i = 1}^{\bar{N}}|F_\theta(x_i) - y_i|}\sum_{i=1}^{\bar{N}}|F_\theta(x_i) - y_i|\nonumber\\
    &\ \cdot\frac{\partial\Big((F_\theta(x_i) - y_i)^2\Big)^{\frac{1}{2}}}{\partial(F_\theta(x_i) - y_i)^2}\frac{\partial(F_\theta(x_i) - y_i)^2}{\partial\theta}\nonumber\\
    =\ &\ \frac{1}{2\sum_{i = 1}^{\bar{N}}|F_\theta(x_i) - y_i|}\sum_{i=1}^{\bar{N}}\frac{\partial(F_\theta(x_i) - y_i)^2}{\partial\theta}.
\end{align}
Then, setting
\begin{equation}
    c = \frac{1}{\bar{N}}\sum_{i = 1}^{\bar{N}}|F_\theta(x_i) - y_i|
\end{equation}
completes the proof.
\end{proof}
\endgroup

Theorem~\ref{thm:key} states that by using prioritized sampling, the loss function is equivalent to an L2 loss for the non-zero loss portion, while the loss function is an L1 loss when using uniform sampling, as in~\cite{RobeyHLZDTM20}.

\begin{thm}
\label{thm:L1L2}
Define the absolute prediction error (APE) for the $i$-th data point in the replay buffer under L1 loss with learning rate $\eta\in\mathbb{R}_+$ at time $t$ as
\begin{equation}
    \delta_t(i) = |F_{\theta_t}(x_i) - y_i|,
\end{equation}
and under L2 loss with learning rate $\eta$ at time $t$ as
\begin{equation}
    \widetilde\delta_t(i) = |\widetilde F_{\theta_t}(x_i) - y_i|.
\end{equation}
Let $F_{\theta_0}(x) = \widetilde F_{\theta_0}(x)$. If $\delta_0(i) = \widetilde\delta_0(i) \geq 1$, $\forall i = 1, \cdots, \bar{N}$, then there exists $\epsilon(i) \in (0, 1]$, such that for an initial APE $\delta_0$ that satisfies $\epsilon(i) \leq \delta_0$, the time it takes to be reduced to $\epsilon(i)$ under L1 loss, defined as
\begin{equation}
    t_\epsilon = \min_t\{t \geq 0: \delta_t(i) \leq \epsilon(i)\},
\end{equation}
is longer than the time it takes under L2 loss, defined as
\begin{equation}
    \widetilde{t}_\epsilon = \min_t\{t \geq 0: \widetilde{\delta}_t(i) \leq \epsilon(i)\},
\end{equation}
i.e., $\widetilde{t}_\epsilon \leq t_\epsilon$.
\end{thm}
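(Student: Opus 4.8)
The plan is to reduce the comparison to a pair of scalar recursions for the per-point error and then exploit the additive-versus-geometric nature of the two decays. Fix a data point $i$ and suppress it in the notation. The starting observation is the per-point identity already used in Theorem~\ref{thm:key}: the gradient of the squared error equals the gradient of the absolute error scaled by the absolute error itself, $\nabla_\theta \tfrac12(F_\theta-y)^2 = |F_\theta-y|\,\nabla_\theta|F_\theta-y|$. Linearizing the effect of one gradient-descent step on the prediction $F_{\theta_t}(x_i)$ and writing $\gamma_t = \|\partial F_{\theta_t}(x_i)/\partial\theta\|^2$, the absolute prediction error obeys, to first order, an \emph{additive} update under L1, $\delta_{t+1} = \delta_t - \eta\gamma_t$, and a \emph{multiplicative} update under L2, $\widetilde\delta_{t+1} = \widetilde\delta_t(1-\eta\widetilde\gamma_t)$. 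Passing to prediction space (equivalently, absorbing the Jacobian-norm factor into the learning rate so the two runs share a common effective rate) yields the clean recursions $\delta_t = \delta_0 - \eta t$ and $\widetilde\delta_t = \delta_0(1-\eta)^t$ for $\eta\in(0,1)$.

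From here the hitting times follow immediately. Solving $\delta_{t_\epsilon}=\epsilon$ gives $t_\epsilon = (\delta_0-\epsilon)/\eta$, while $\widetilde\delta_{\widetilde t_\epsilon}=\epsilon$ gives $\widetilde t_\epsilon = \ln(\delta_0/\epsilon)/\ln\!\big(1/(1-\eta)\big)$, up to the rounding implicit in the $\min_t$ definitions. Two facts then drive the comparison. First, $-\ln(1-\eta) = \eta + \eta^2/2 + \cdots \ge \eta$, so $\widetilde t_\epsilon \le \ln(\delta_0/\epsilon)/\eta$. Second --- and this is where the hypothesis $\delta_0\ge1$ and the restriction $\epsilon\in(0,1]$ enter --- I would invoke $\phi(z)=\ln z - z$, which is strictly increasing on $(0,1]$ with maximum $\phi(1)=-1$ and decreasing on $[1,\infty)$. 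Since $\delta_0\ge1$ forces $\phi(\delta_0)\le-1$, and $\phi$ sweeps $(-\infty,-1]$ as $\epsilon$ ranges over $(0,1]$, there is a value $\epsilon(i)\in(0,1]$ (taken at or above the unique point where $\phi(\epsilon)=\phi(\delta_0)$) with $\phi(\epsilon(i))\ge\phi(\delta_0)$, which rearranges to $\ln(\delta_0/\epsilon(i))\le\delta_0-\epsilon(i)$. Combining the two facts yields $\widetilde t_\epsilon \le \ln(\delta_0/\epsilon(i))/\eta \le (\delta_0-\epsilon(i))/\eta = t_\epsilon$, which is the claim; note $\epsilon(i)\le1\le\delta_0$ makes the reduction feasible.

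The conceptual content is simply that while $\delta>1$ the L2 gradient exceeds the L1 gradient in magnitude, so L2 shrinks the error faster, and both effects (geometric-versus-additive decay and the bound $-\ln(1-\eta)\ge\eta$) favor L2. The main obstacle I anticipate is the step that turns the parameter-space dynamics into scalar recursions: along the two trajectories $\theta_t$ and $\widetilde\theta_t$ differ, so the Jacobian-norm factors $\gamma_t$ and $\widetilde\gamma_t$ need not coincide, and the linearization relating the $\theta$-update to the error update is only first order. Making this rigorous requires either genuinely working in prediction space, or an assumption that $\partial F_\theta(x_i)/\partial\theta$ is approximately common across the two runs --- e.g. a neural-tangent-kernel-style near-constancy near initialization --- together with control of the rounding in the discrete hitting-time definitions. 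Everything downstream of that reduction is the elementary unimodality argument for $\phi$.
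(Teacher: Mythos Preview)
Your argument is sound and lands at the same place, but the route differs from the paper's. The paper works in continuous time (gradient flow): it computes
\[
\frac{d\delta_t}{dt}=-\eta\,\Big\|\tfrac{\partial F_{\theta_t}(x_i)}{\partial\theta}\Big\|^2,
\qquad
\frac{d\widetilde\delta_t}{dt}=-\eta\,\widetilde\delta_t\,\Big\|\tfrac{\partial F_{\theta_t}(x_i)}{\partial\theta}\Big\|^2,
\]
notes that the L2 rate dominates whenever $\widetilde\delta_t>1$, and then invokes the comparison lemma (Khalil) to conclude that the L2 error reaches level $1$ first and that the two error curves must cross somewhere in $(0,1]$; that crossing value furnishes $\epsilon(i)$. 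You instead linearize in discrete time, reduce to the scalar recursions $\delta_t=\delta_0-\eta t$ and $\widetilde\delta_t=\delta_0(1-\eta)^t$, extract closed-form hitting times, and compare them via the unimodality of $\phi(z)=\ln z - z$ together with $-\ln(1-\eta)\ge\eta$. Your route buys explicit formulas and a concrete description of which $\epsilon(i)$ work, at the cost of the extra $\phi$-analysis; the paper's route is considerably shorter, needing only the rate inequality and one appeal to a standard ODE comparison principle, with no explicit hitting-time calculus at all. Both arguments share precisely the gap you identify at the end --- the Jacobian-norm factors along the two parameter trajectories need not coincide once $\theta_t\neq\widetilde\theta_t$, so neither the paper's rate comparison nor your scalar reduction is fully rigorous without an NTK-type near-constancy assumption; the paper simply writes the same Jacobian in both expressions and does not flag this.
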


\begin{proof}
Given
\begin{equation}
    \frac{\partial\delta_t(i)}{\partial F_{\theta_t}(x_i)} = \frac{\partial\widetilde\delta_t(i)}{\partial F_{\theta_t}(x_i)} = \frac{\partial\mathcal{L}_{L1}(F_{\theta_t}, y)}{\partial F_{\theta_t}(x_i)} = \mathrm{sgn}\{F_{\theta_t}(x_i) - y_i\}
\end{equation}
and
\begin{equation}
    \frac{\partial\mathcal{L}_{L2}(F_{\theta_t}, y)}{\partial F_{\theta_t}(x_i)} = F_{\theta_t}(x_i) - y_i,
\end{equation}
we have
\begingroup
\allowdisplaybreaks
\begin{align}
\label{eq:ddelta_t_dt}
    \frac{d\delta_t(i)}{dt} &= {\frac{\partial\delta_t(i)}{\partial F_{\theta_t}(x_i)}\frac{dF_{\theta_t}(x_i)}{dt}}\nonumber\\
    &= {\frac{\partial \delta_t(i)}{\partial F_{\theta_t}(x_i)}\frac{\partial F_{\theta_t}(x_i)}{\partial\theta_t}(-\eta)\Big(\frac{\partial\mathcal{L}_{L1}(F_{\theta_t}, y)}{\partial F_{\theta_t}(x_i)}\frac{\partial F_{\theta_t}(x_i)}{\partial\theta_t}\Big)^T}\nonumber\\
    &= -\eta\frac{\partial F_{\theta_t}(x_i)}{\partial\theta_t}\Big(\frac{\partial F_{\theta_t}(x_i)}{\partial\theta_t}\Big)^T
\end{align}
\endgroup
and
\begin{align}
    \frac{d\widetilde\delta_t(i)}{dt} &= \frac{\partial\widetilde\delta_t(i)}{\partial F_{\theta_t}(x_i)}\frac{dF_{\theta_t}(x_i)}{dt}\nonumber\\
     &= {\frac{\partial\widetilde\delta_t(i)}{\partial F_{\theta_t}(x_i)}\frac{\partial F_{\theta_t}(x_i)}{\partial\theta_t}(-\eta)\Big(\frac{\partial\mathcal{L}_{L2}(F_{\theta_t}, y)}{\partial F_{\theta_t}(x_i)}\frac{\partial F_{\theta_t}(x_i)}{\partial\theta_t}\Big)^T}\nonumber\\
    &= -\eta\widetilde\delta_t(i)\frac{\partial F_{\theta_t}(x_i)}{\partial\theta_t}\Big(\frac{\partial F_{\theta_t}(x_i)}{\partial\theta_t}\Big)^T
\end{align}
with $\partial F_{\theta_t}(x_i)/\partial\theta_t$ being a row vector. This shows that when $\widetilde\delta_t(i) > 1$, the APE will decrease faster under L2 loss than under L1 loss. Then, based on the comparison lemma~\cite{khalil2015nonlinear}, if $\delta_0(i)$ is greater than one, the APE under L2 loss will reach APE equal to one faster than under L1 loss. The two APEs will intersect at some value lower than one. Thus, for some value $\epsilon(i) \in (0, 1]$, we have the relationship $\widetilde t_\epsilon \leq t_\epsilon$.
\end{proof}

\begin{rem}
For the CBF losses in~\eqref{eq:safe_unsafe_loss} and~\eqref{eq:cbf_constraint_loss}, when $\delta_0(i) < 1$, L1 loss converges faster to zero than L2 loss.
\end{rem}

\begin{rem}
For the CBF losses in~\eqref{eq:safe_unsafe_loss} and~\eqref{eq:cbf_constraint_loss}, the goal is not to converge to zero but go past zero. For example, in the case of $\mathcal{L}_{-}$, the goal is not to make $\widetilde{h}(x_i\mid\theta) - d(x_i) = 0$ but rather $\widetilde{h}(x_i\mid\theta) - d(x_i) \leq 0$.
\end{rem}

Theorem~\ref{thm:L1L2} shows when APE is larger than one, instantaneously, L2 loss converges faster, and when the APE is less than one, L1 loss converges faster. In the learning-based CBF refinement setting, using the proposed prioritized sampling and priority assignment scheme, the effect is a combination of both L1 and L2 loss. To understand this effect, we show that the data sampling process within one epoch can be seen as having two phases. For the first phase, the majority of the data sampled are newly collected data points, and according to our empirical studies, the majority have an APE of less than one. In the second phase, after most of the new data points are sampled, if~\eqref{eq:update_relationship} is satisfied, data points with the largest APEs will more likely get sampled. If $B$ and $N_U$ are well chosen, the amount of data sampled in the second phase with an APE lower than one will be small, and the majority of the data sampled with have an APE larger than one. Therefore, this data collection and sampling scheme can be approximately seen as applying L1 loss with uniform sampling to low APE data points while applying L1 loss with prioritized sampling (i.e., effectively L2 loss) to high APE data points. This makes our proposed approach faster than using either L1 or L2 loss with uniform sampling.

\begin{figure*}[t!]
    \centering
    \includegraphics[width=\textwidth]{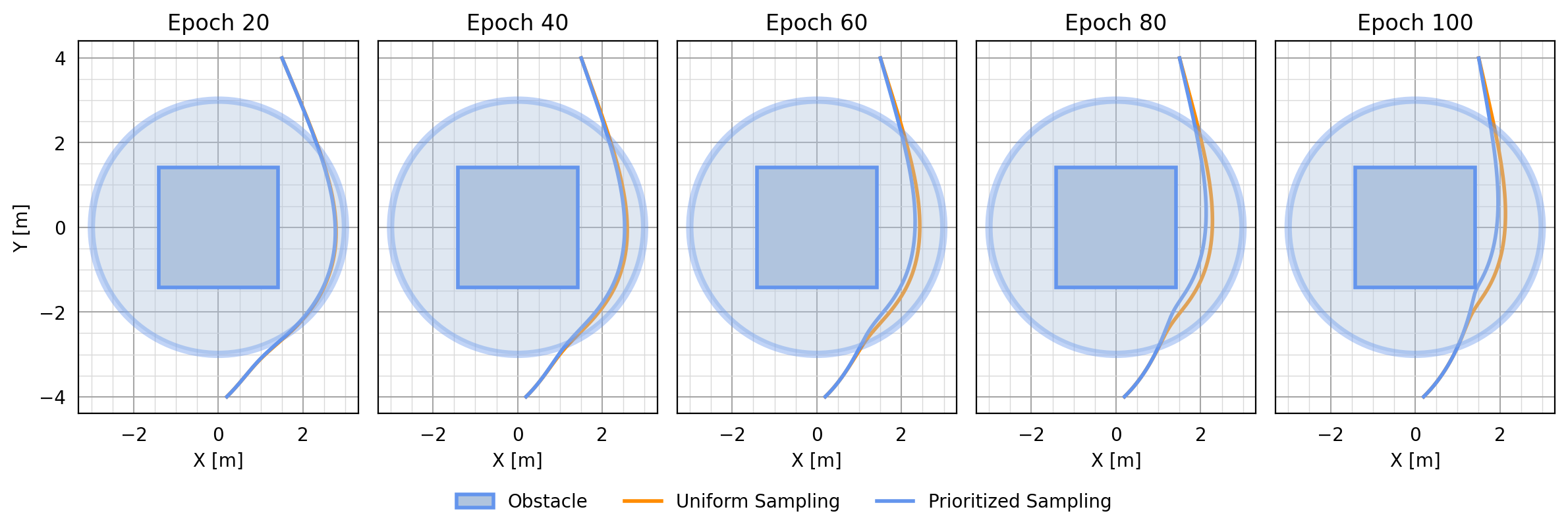}
    \caption{This figure compares the state trajectories generated by the learned controller using uniform sampling and prioritized sampling for the unicycle system. The darker blue square represents the square obstacle. The light blue circle represents the unsafe region corresponding to the HCBF.}
    \label{fig:unicycle_training}
\end{figure*}

\begin{figure}[t!]
    \centering
    \includegraphics[width=0.49\textwidth]{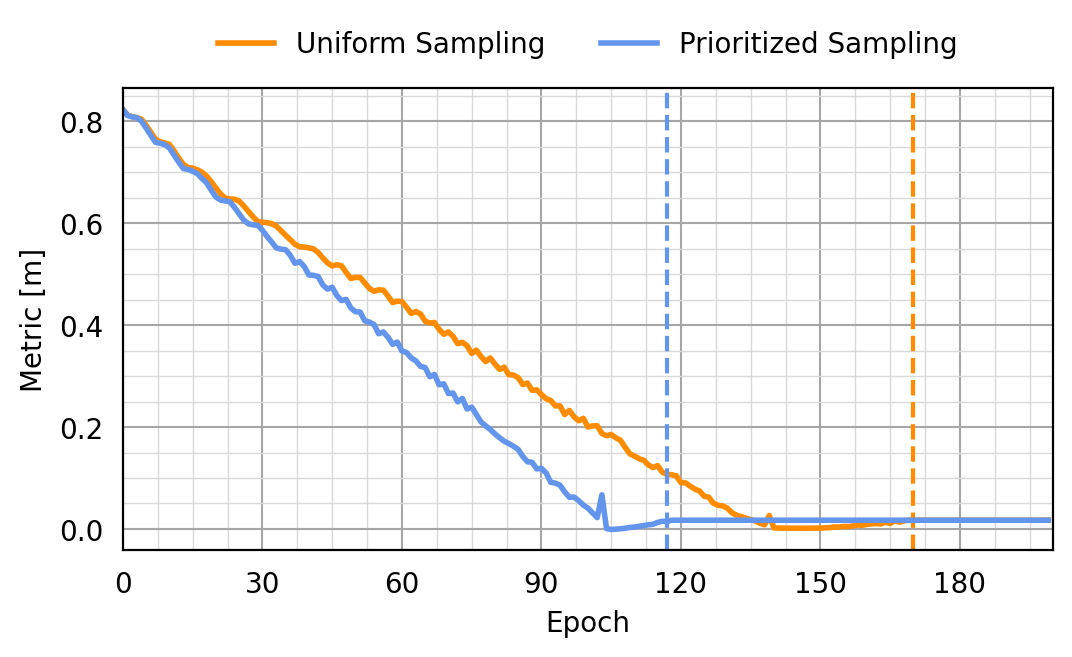}
    \caption{This figure compares the evaluation metric for using uniform and prioritized sampling over the training process on a unicycle system. The vertical dashed lines denote the epoch of convergence.}
    \label{fig:unicycle_training_metric}
\end{figure}

\subsection{Evaluation Metric}
To quantify the effectiveness of our proposed approach, we would need to find a metric that represents the size of the learned safe set. We use an approach inspired by Hamilton-Jacobi (HJ) reachability-based methods~\cite{BansalCHT17}. First, we define a distance between the current state and the safe set boundary 
\begin{equation}
   \ell(x) = \min\{c_1(x), \cdots, c_r(x)\}.
\end{equation}
Then, the evaluation metric is defined as
\begin{equation}
    m_e = \min_{t\in[0, T]}\Big\{\ell(x(t)) \mid x(0)\in\mathcal{X}_0, \dot{x}(t) = f_{cl}(x\mid\theta)\Big\} 
\end{equation}
with $\mathcal{X}_0$ being the set of initial states and $f_{cl}: \mathbb{R}^n\rightarrow\mathbb{R}^n$ the closed-loop system with control actions generated by the learned CBF controller.

\section{Simulation}
In this section, we show the effectiveness of our approach using simulation studies. 

\subsection{Unicycle}
The task is to reach a target state while avoiding collision with a square obstacle. The unicycle system has the dynamics
\begin{equation}
\label{eq:unicycle_dynamics}
    \begin{bmatrix}
        \dot{x}\\
        \dot{y}\\
        \dot{\phi}
    \end{bmatrix} = \begin{bmatrix}
        \cos\phi & 0\\
        \sin\phi & 0\\
        0 & 1
    \end{bmatrix}\begin{bmatrix}
        v\\
        \omega
    \end{bmatrix}
\end{equation}
with $x$ denoting the position of the unicycle along the $x$ axis, $y$ denoting the position along the $y$ axis, and $\phi$ denoting the heading of the unicycle. The control inputs are the linear velocity $v$ and the angular velocity $\omega$. Given that it is difficult to construct an HCBF for a square obstacle, we write an HCBF for a circular obstacle~\cite{DBLP:journals/corr/abs-2110-05415} that contains the square obstacle
\begin{equation}
    \widehat{h}(\x) = x^2 + y^2 + 2xl\cos\phi + 2yl\sin\phi + l^2 - r^2
\end{equation}
where $r$ is the radius of the constructed circular obstacle and $l$ is a predefined lookahead distance. This HCBF corresponds to the safe set under the constraint
\begin{equation}
    x^2 + y^2 \geq r^2.
\end{equation}
During training, we use a proportional controller as the performance controller
\begin{equation}
    \pi(x, y, \phi) = \begin{bmatrix}
        K_ve\\
        K_\omega(\beta - \phi)
    \end{bmatrix}
\end{equation}
with $K_v$ and $K_\omega$ being the control gains, and
\begingroup
\allowdisplaybreaks
\begin{subequations}
\begin{align}
    e &= \sqrt{(x - x_{\mathrm{des}})^2 + (y - y_{\mathrm{des}})^2}\\
    \beta &= \mathrm{atan}2(y_{\mathrm{des}} - y, x_{\mathrm{des}} - x).
\end{align}
\end{subequations}
\endgroup
In our experiments, we set $K_v = 0.75$ and $K_\omega = 3.0$. The distance function is chosen to be
\begin{equation}
    d(x) = \max(|x|, |y|) - s/2
\end{equation}
with $s$ being the side length of the square obstacle.  For this example, $\ell(x)$ is the same as the distance function. We trained two learned CBF controllers, one using vanilla ER and the other one using PER. To ensure a fair comparison, we fixed all random seeds. During training, we set $\lambda = 100$, and use a learning rate of $10^{-4}$. We trained the learned controllers for 200 epochs, where at each epoch, we collected one episode of data (each episode has 1000 time steps) and updated the learned controllers 10 times using a batch size of 128. The state trajectory generated by the two learned controllers at different epochs is shown in Figure~\ref{fig:unicycle_training}, and the evaluation metric computed at each epoch is shown in Fig.~\ref{fig:unicycle_training_metric}. It can be easily seen from both figures that the controller using prioritized sampling is able to enlarge the estimated safe set faster. Additionally, we can see from Fig.~\ref{fig:unicycle_training_metric}, that after convergence, the evaluation metric for both of the controllers is the same. Thus, we can conclude that using prioritized sampling expedited the training while not making the performance of the learned controller deteriorate.

\subsection{Two-Link Arm}

\begin{figure*}[t!]
    \centering
    \includegraphics[width=\textwidth]{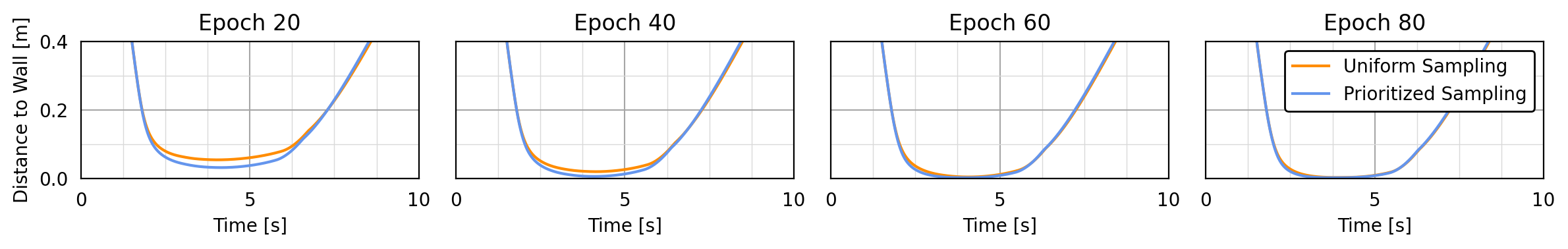}
    \caption{This figure compares the end-effector trajectories generated by the learned controller using uniform sampling and prioritized sampling.}
    \label{fig:two_link_arm_training}
\end{figure*}

\begin{figure}[t!]
    \centering
    \includegraphics[width=0.49\textwidth]{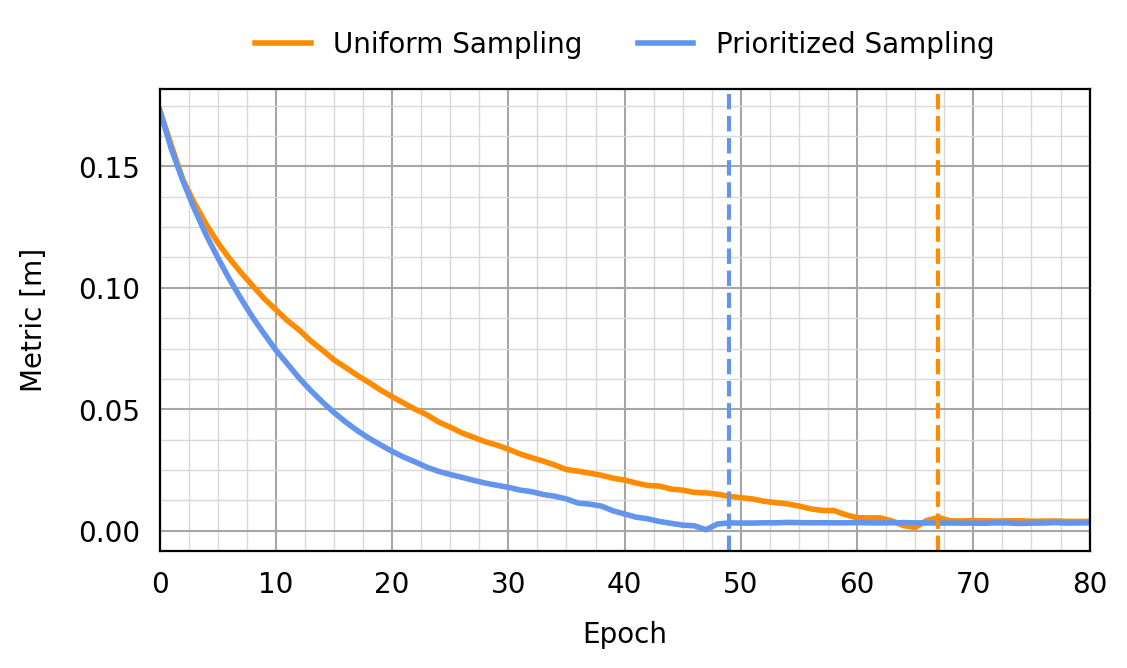}
    \caption{Comparison of the evaluation metric between using uniform and prioritized sampling over the training process on a two-link arm system. The vertical dashed lines denote the epoch of convergence.}
    \label{fig:two_link_arm_prioritized_sampling}
\end{figure}

\begin{figure}[t!]
    \centering
    \includegraphics[width=0.49\textwidth]{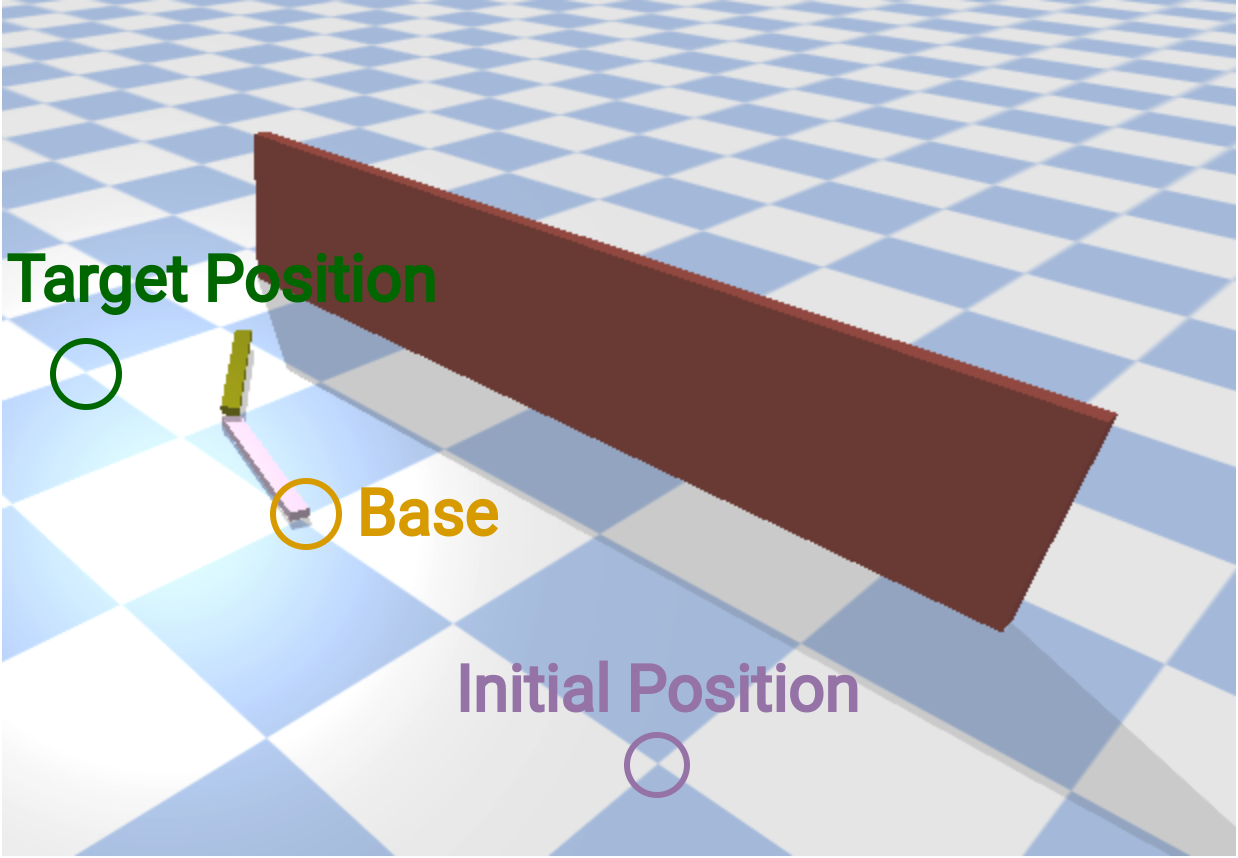}
    \caption{Screenshot of the two-link arm simulation environment. The task is to move from the initial position to the target position while not colliding with the wall. The first link is light purple and the second link is dark green. Each tile on the ground is $1m\times1m$.}
    \label{fig:two_link_arm_sim}
\end{figure}

The two-link arm system has the dynamics
\begin{equation}
    \begin{bmatrix}
        \dot{q}\\
        \ddot{q}
    \end{bmatrix} = \begin{bmatrix}
        \dot{q}\\
        -M^{-1}(q)[C(q, \dot{q}) + G(q)]
    \end{bmatrix} + \begin{bmatrix}
        0\\
        M^{-1}(q)
    \end{bmatrix}\tau
\end{equation}
with $q\in\mathbb{R}^2$ representing the joint angles, $\tau\in\mathbb{R}^2$ representing the joint torques, $M(q)\in\mathbb{R}^{2\times2}$ representing the inertia matrix (which is positive definite), $C(q, \dot{q})\in\mathbb{R}^{2\times1}$ representing the Coriolis and centrifugal terms, and $G(q)\in\mathbb{R}^{2\times1}$ representing the gravitational terms. The task is to move the end-effector from the initial position to the target position while avoiding collision with a wall, which is illustrated in Fig.~\ref{fig:two_link_arm_sim}. The link length is $1$m for both of the links. The wall is positioned at $y = 1.75$m. Both the two-link arm and the wall can be seen as convex polytopes. Thus, the task is a polytopic obstacle avoidance task. To deal with this task, work has been done using discrete-time-CBF-based MPC~\cite{AkshayZS22} and finding a conservative kinematic controller~\cite{SingletaryGMSA22}. In this example, we learn a CBF-based dynamic controller to solve this task.

Given the difficulty in constructing an HCBF for polytopic constraints, we construct an HCBF that constraints the position of the end-effector to be less than $\beta$ along $\bar{n}\in\mathbb{R}^2$
\begin{equation}
\label{eq:two-link-arm-hcbf}
    \widehat{h}(x) = -\bar{n}^TJ(q)\dot{q} + \gamma(\beta - \bar{n}^T\mathrm{FK}(q))
\end{equation}
with $J(q)\in\mathbb{R}_{3\times 2}$ being the Jacobian matrix for the end-effector, $\mathrm{FK}: \mathbb{R}^2\rightarrow\mathbb{R}^2$ the forward kinematics function for the end-effector, $\bar{n}\in\mathbb{R}^2$ the normal vector of the edge of the polytopic obstacle, $d_\mathrm{max}\in\mathbb{R}$ the distance between the base of the two-link arm and the obstacle edge along $\bar{n}$, and $\gamma\in\mathbb{R}_+$. In this example $n = 2$. The HCBF in~\eqref{eq:two-link-arm-hcbf} corresponds to the constraint
\begin{equation}
    p_{ee}^T\bar{n} \leq \beta
\end{equation}
with $p_{ee}\in\mathbb{R}^2$ representing the position of the end-effector. For this example, we choose $\beta = 1.5$ and $\bar{n} = [0, 1]^T$. During training, we set
\begin{equation}
    d(x) = \mathrm{Distance}(x) - \mathrm{Penetration}(x)
\end{equation}
where $\mathrm{Distance}(x)$ is computed using the Gilbert Johnson Keerthi (GJK) algorithm~\cite{gilbert1988fast} and $\mathrm{Penetration}(x)$ is computed using the expanding polytope algorithm (EPA)~\cite{van2001proximity}. We choose $\ell(x)$ to be the same as the distance function. To ensure safety during training, we artificially enlarge the unsafe set and treat all interactions within the new unsafe set as unsafe. During training, we set $\lambda = 100$, and we use a learning rate of $10^{-4}$. We trained the learned controllers for 80 epochs, where at each epoch, we collected one episode of data (each episode has 3000 time steps) and updated the learned controllers 30 times using a batch size of 128. The value of $d(x)$ along the generated trajectory at different epochs is shown in Fig.~\ref{fig:two_link_arm_training}, and the evaluation metric over the learning process is shown in Fig.~\ref{fig:two_link_arm_prioritized_sampling}. We can see that the learned controller using prioritized sampling converges faster while not affecting performance.
\section{Conclusion}
In this paper, we proposed an algorithmic approach to learning a CBF in a data-efficient fashion, starting from an HCBF. By utilizing prioritized sampling, our proposed method is able to find an estimation of the true safe set while using a smaller amount of data compared to uniform sampling. We demonstrated the effectiveness of our approach on a unicycle and a two-link arm system. In future works, we plan to study the integration of a learned performance controller into the proposed framework as well as perform experimental studies on robotic systems testbeds.

\vspace*{-0.05in}
\bibliographystyle{IEEEtran}
\bibliography{IEEEabrv, refs.bib}
\end{document}